\documentclass[11pt]{article}
\usepackage[utf8]{inputenc}
\usepackage[T1]{fontenc}
\usepackage{lmodern}
\usepackage[margin=1in]{geometry}
\usepackage{amsmath}
\usepackage{amsthm}
\usepackage{enumitem}
\usepackage{graphicx}
\usepackage[hidelinks]{hyperref}
\usepackage{microtype}
\usepackage{algorithm}
\usepackage{algorithmicx}
\usepackage[noend]{algpseudocode} 

\newcommand{\Id}[1]{\mathrm{#1}}   
\newcommand{\Fn}[1]{\texttt{#1}}   
\algnewcommand\algorithmicswitch{\textbf{switch}}
\algnewcommand\algorithmiccase{\textbf{case}}
\algdef{SE}[SWITCH]{Switch}{EndSwitch}[1]{\algorithmicswitch\ #1\ \algorithmicdo}{\algorithmicend\ \algorithmicswitch}%
\algdef{SE}[CASE]{Case}{EndCase}[1]{\algorithmiccase\ #1}{\algorithmicend\ \algorithmiccase}%
\algtext*{EndSwitch}%
\algtext*{EndCase}%
\newtheorem{theorem}{Theorem}

\begin{document}

\title{Fully Dynamic Breadth First Search and Spanning Trees in Directed Graphs}
\author{
  Gregory Morse\\
  ELTE Eötvös Loránd University\\
  Budapest, Hungary\\
  \texttt{morse@inf.elte.hu}
  \and
  Tam\'as Kozsik\\
  ELTE Eötvös Loránd University\\
  Budapest, Hungary\\
  \texttt{kto@elte.hu}
}
\date{}

\maketitle

\begin{abstract}
  We study the problem of maintaining a breadth-first spanning tree and
  the induced BFS ordering in a directed graph under edge updates. While
  semi-dynamic algorithms are known, maintaining the spanning tree,
  level information, and numbering together in the fully dynamic setting
  is less developed. This preprint presents a framework for fully
  dynamic BFS in directed graphs together with supporting data
  structures for maintaining the BFS tree, single-source shortest paths,
  and single-source reachability under both insertions and deletions.
\end{abstract}

\noindent\textbf{Keywords:} dynamic graph algorithms, breadth-first search,
spanning trees, directed graphs, shortest paths.

\section{Breadth-first search (BFS)}\label{sec:bfs}

Breadth-first search (BFS) is a foundational primitive for reachability,
unweighted shortest paths, level decompositions, and traversal-based graph
analysis. In directed graphs, the problem of maintaining a breadth-first
spanning tree (BFST) together with the induced numeric level ordering under
updates is significantly subtler than maintaining distances alone. The
\emph{semi-dynamic} case is classical, but a thorough \emph{fully dynamic}
algorithm that simultaneously preserves the spanning tree, depth information,
and a valid BFS ordering is not presently standard in the literature. Here we
develop a framework for fully dynamic BFS in directed graphs that explicitly
maintains these three objects. As a byproduct, the framework also maintains
single-source shortest paths (SSSP) in the unit-weight setting and is therefore
immediately adaptable to single-source reachability (SSR). It is thus useful to
separate this problem from the more global transitive-closure setting
\cite{DemetrescuI06}, even
though SSR maintenance can itself serve as a building block for maintaining
strongly connected components (SCCs).

We consider a directed graph $G=(V,E)$ with a distinguished source vertex $r$,
where $n=|V|$ and $m=|E|$. Updates consist of edge insertions and deletions.
Unless stated otherwise, the source remains fixed throughout the update
sequence. Our goal is to maintain a BFST rooted at $r$, a level function, and
an explicit global numbering consistent with BFS discovery order. This last
requirement is important: many dynamic shortest-path structures preserve only
distance labels, whereas the actual tree and the associated order are often the
objects required by higher-level algorithms \cite{DemetrescuI06}.

The literature already shows that maintaining reachability alone is highly
nontrivial. Even and Shiloach solved the online edge-deletion problem for SSR,
yielding the classical ES-tree data structure with $\mathcal{O}(1)$ query time
and $\mathcal{O}(mn)$ total update time \cite{10.1145/322234.322235}.
Henzinger, Krinninger, and Nanongkai later gave a sublinear-time randomized
decremental algorithm for single-source reachability and shortest paths in
directed graphs \cite{10.1145/2591796.2591869}. Chechik et al. further obtained
a decremental algorithm that maintains SSR together with SCC information in
$\tilde{O}(m\sqrt{n})$ total update time \cite{7782945}. These results
underscore that maintaining traversal structure in directed graphs is already
challenging even before one insists on preserving an explicit BFST and BFS
numbering.

A BFS tree from a given root node of a digraph is a spanning tree in which each
vertex is first discovered at minimum distance from the root. Such a tree
exists for every finite directed graph and designated root reachable from the
relevant vertices, and it can be constructed by the standard BFS procedure.
Formally, a BFST is a rooted tree $T=(V_T,P)$, where $V_T\subseteq V$ is the set
of reachable vertices and $P$ is the set of parent--child edges. Every edge
$(x,y)\in P$ satisfies $(x,y)\in E$, except that the root is represented by the
convention $\operatorname{pred}(r)=\emptyset$.

Let the distance function from the root be defined recursively by
\[
\operatorname{dist}(x)=
\begin{cases}
0 & x=r,\\
\operatorname{dist}(\operatorname{pred}(x))+1 & x\neq r.
\end{cases}
\]
By construction, BFS guarantees that for every tree edge $(x,y)\in P$, there is
no edge $(x',y)\in E$ with $\operatorname{dist}(x')<\operatorname{dist}(x)$. In
other words, no predecessor at a smaller distance than the chosen parent has an
edge to the same child. This condition does not determine the tree uniquely:
multiple valid BFS trees may exist for the same root and graph.

\paragraph{BFS ordering.}
A BFS ordering is the linear order induced by the maintained BFST and its
per-level tie-breaking rule. Let $\sigma=(v_1,\ldots,v_n)$ be a list of distinct
elements of $V$. We say that $\sigma$ is a BFS ordering if for all
$1\leq i<j<k\leq n$, whenever
$v_i\in \operatorname{pred}(v_k)\setminus \operatorname{pred}(v_j)$, there exists
$q<i$ such that $v_q\in \operatorname{pred}(v_j)$. Intuitively, if a predecessor
of $v_k$ is discovered early, then some predecessor of $v_j$ must have been
discovered even earlier; this captures the layer-respecting nature of BFS.

\paragraph{Complexity of static BFS.}
A properly implemented BFS runs in $\mathcal{O}(m+n)$ time and uses
$\mathcal{O}(n)$ additional space. In the dynamic setting, the space bound
remains linear, but the main difficulty shifts to update time: one seeks to
restructure only the affected region instead of recomputing a complete BFS from
scratch after each change.

The relationship between BFS and depth-first search (DFS) helps clarify why the
fully dynamic problem is delicate. DFS and its associated spanning tree induce
multiple natural orderings (e.g., preorder and postorder), whereas BFS induces
a single level-respecting order. Moreover, a DFS tree in a directed graph can
contain forward edges from a vertex to a proper descendant that is not its tree
child, while a BFS tree cannot exhibit the analogous phenomenon because level
minimality rules it out. Since fully dynamic DFS in directed graphs is already
known \cite{10.14778/3364324.3364329}, it is natural to expect a comparable BFS
result; nevertheless, preserving BFS tree structure and a stable BFS numbering
turns out to require a different set of invariants.

Semi-dynamic BFS in directed graphs is known \cite{FRANCIOSA2001201}. That work
maintains forward-star and backward-star orderings of successor and predecessor
adjacencies and gives asymptotically improved update bounds over naive repeated
recomputation. However, the emphasis there is on the semi-dynamic setting and
on maintaining a BFS tree or shortest-path information rather than on
maintaining a global BFS numbering. Our framework strengthens this by removing
the need for those auxiliary star structures, explicitly preserving a BFS
ordering, and reducing unnecessary scanning in the decremental case.

The earlier semi-dynamic framework can be interpreted as close in spirit to a
fully dynamic algorithm when the update routines are combined carefully, but it
still does not preserve the global BFS ordering needed by some downstream graph
analyses. We extend this line of work by maintaining a numbering that remains
consistent with level order throughout the update sequence. We also examined
lexicographic BFS as a possible refinement, as introduced in the classical
vertex-elimination literature and later developed through partition-refinement
techniques \cite{RoseTL76,HabibMPV00}. While lexicographic BFS imposes a
more rigid tie-breaking discipline and is algorithmically appealing, its
partition-refinement machinery adds considerable complexity, and it is not
clear that this extra structure would improve decremental update bounds. In
particular, whether dynamic lexicographic BFS can provide an alternative to
ES-tree-style reasoning remains open.

Figures~\ref{fig:bfstinit}--\ref{fig:bfstedgedel} illustrate the maintained
objects. In each figure, solid arrows denote BFST edges, dashed arrows denote
non-tree edges of the directed graph, and each vertex label has the form
``vertex [BFS index].'' Figure~\ref{fig:bfstinit} shows an initial graph,
Figure~\ref{fig:bfstedgeadd} shows the effect of inserting an edge that improves
the BFS structure, and Figure~\ref{fig:bfstedgedel} shows the effect of a tree
edge deletion that forces a nontrivial reattachment and renumbering.

\begin{figure}[t]
  \centering
  \includegraphics[width=0.84\linewidth]{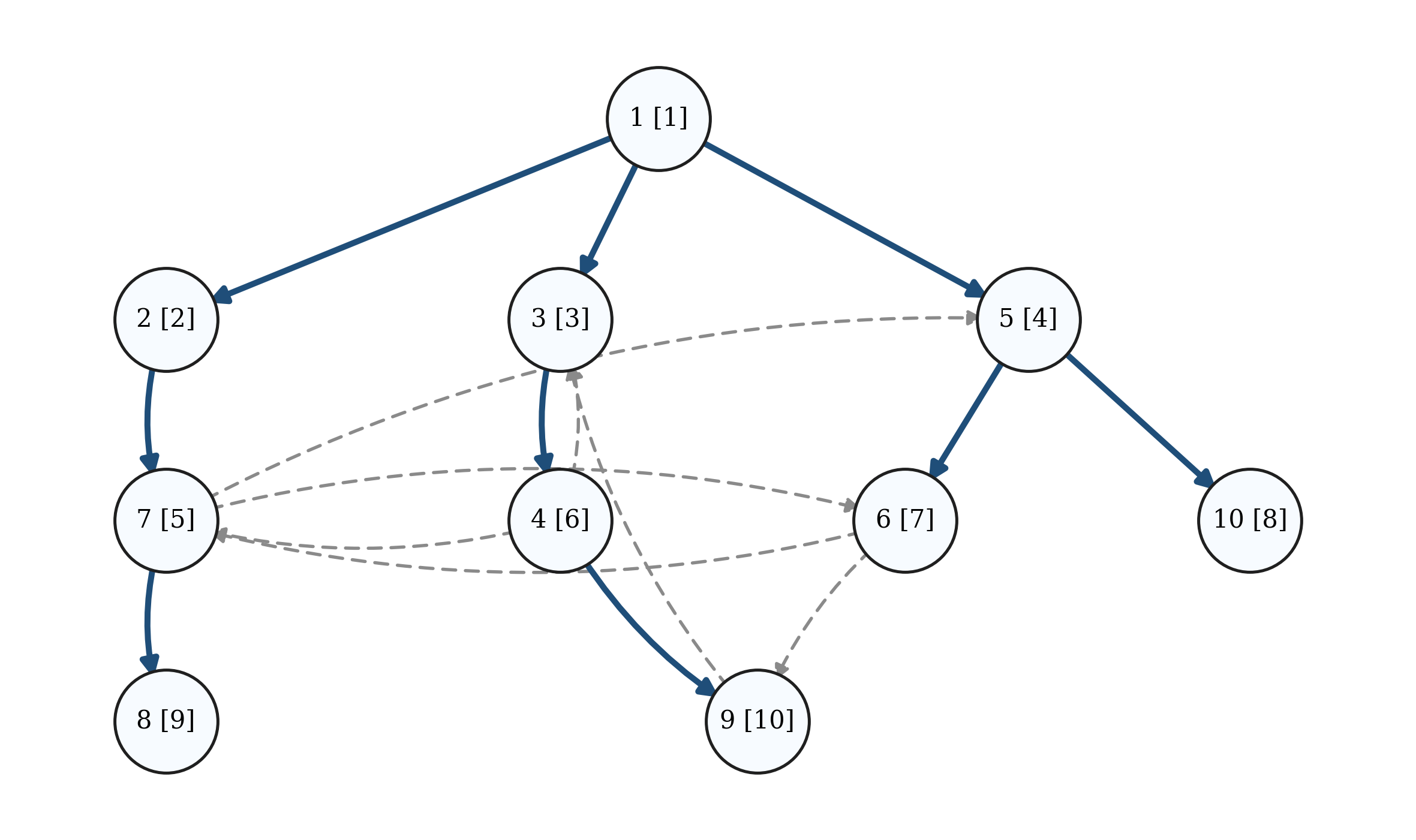}
  \caption{Initial directed graph together with its maintained BFST and global
  BFS numbering. Solid arrows are tree edges; dashed arrows are non-tree graph
  edges.}
  \label{fig:bfstinit}
\end{figure}

\begin{figure}[t]
  \centering
  \includegraphics[width=0.84\linewidth]{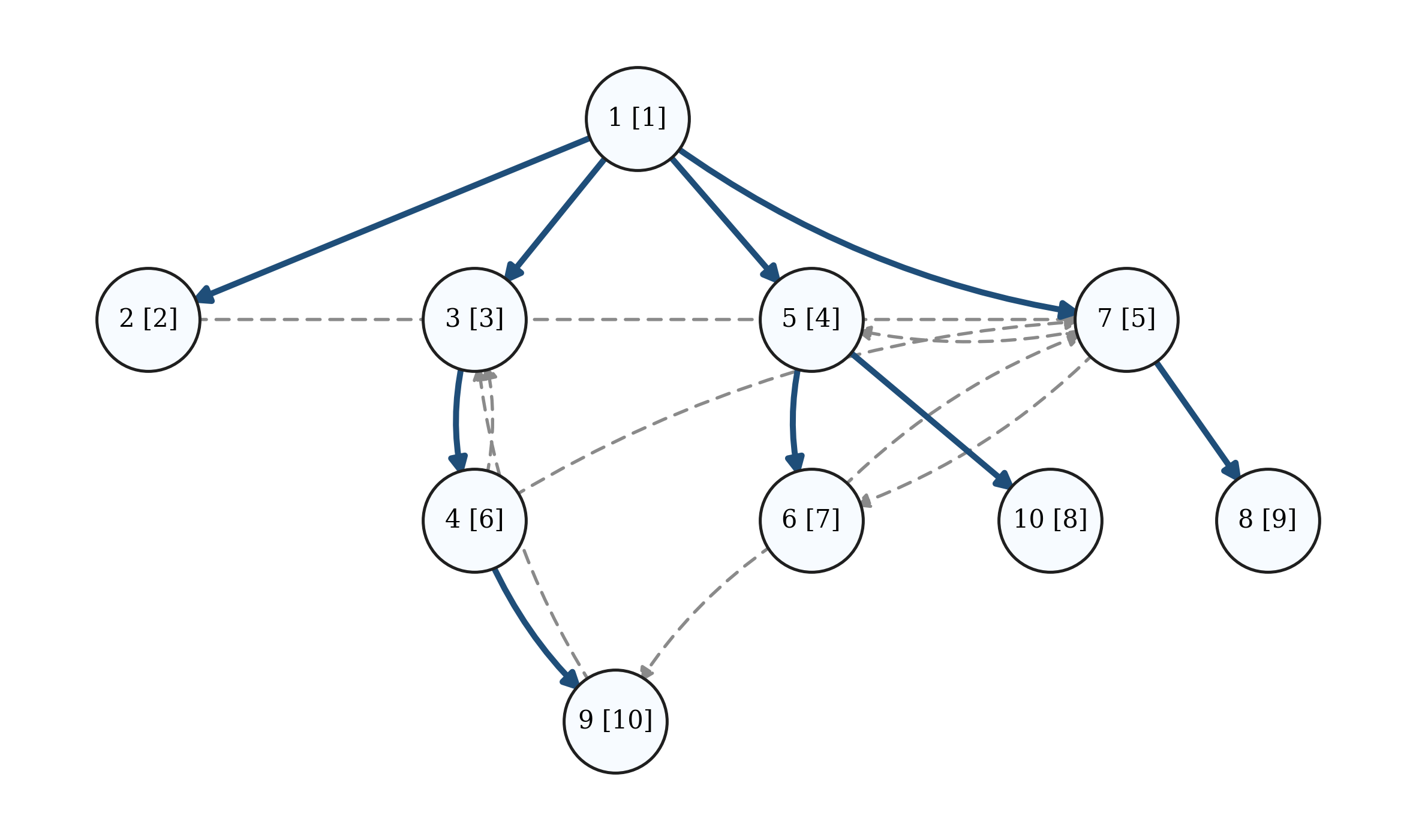}
  \caption{Effect of edge insertion $(1,7)$. The new edge shortens the BFS path
  to vertex $7$, which causes reparenting and renumbering within the affected
  portion of the BFST.}
  \label{fig:bfstedgeadd}
\end{figure}

\begin{figure}[t]
  \centering
  \includegraphics[width=0.74\linewidth]{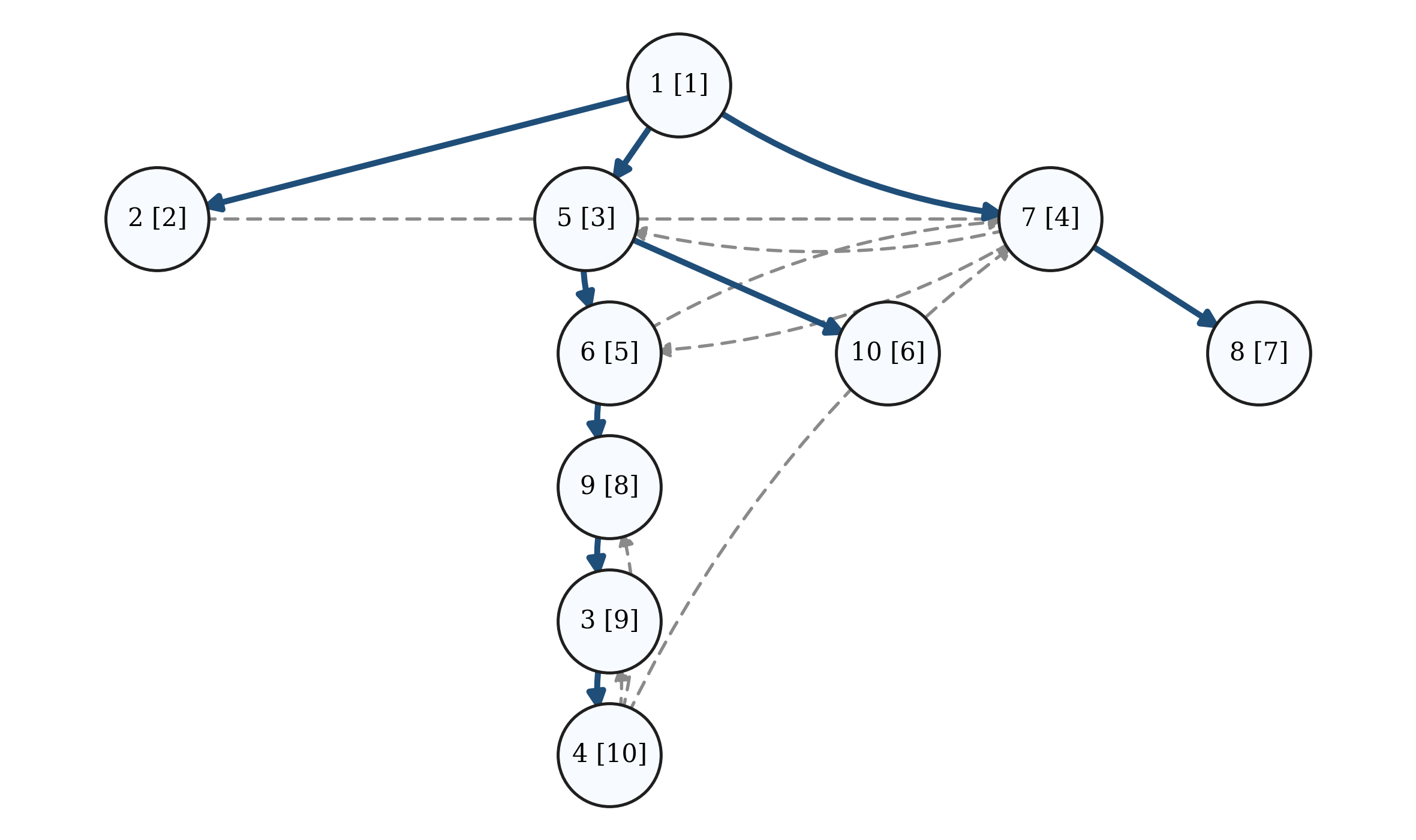}
  \caption{Effect of tree-edge deletion $(1,3)$. The decremental update finds
  replacement parents where possible, raises levels when necessary, and
  preserves a valid BFS numbering on the reachable remainder.}
  \label{fig:bfstedgedel}
\end{figure}

\paragraph{Overview.}
We first fix the BFST representation, numbering scheme, and invariants used
throughout the update process (\S\ref{subsec:bfs-algorithm}). We then present
incremental and decremental maintenance algorithms that preserve the BFST,
depth labels, and BFS order under edge insertions and deletions
(\S\ref{subsec:bfs-incremental}).

\subsection{Algorithm framework}\label{subsec:bfs-algorithm}

The first structural observation is that, much as in dynamic DFS, only certain
updates can alter the maintained tree. On edge insertion, only edges that
create a shorter path to the destination or improve the chosen tie-breaker can
change the BFST, and any such change will convert a non-tree edge into a tree
edge. On edge deletion, only deleted \emph{tree} edges need explicit repair.
From the shortest-path viewpoint, an insertion can never increase the depth of
the destination vertex or any vertex in its affected subtree, whereas a
deletion can never decrease it. Thus insertions may only improve levels, while
deletions may only worsen them or render vertices unreachable.

These monotonicity properties already suggest a localized strategy. Insertion of
an edge $(x,y)$ can alter not only the subtree rooted at $y$ but also any
successor reachable from that subtree whose distance now decreases. Dually,
deletion of a tree edge requires recomputing parent choices from the incoming
edges of vertices in the affected subtree, and the resulting level changes need
not be uniform across that subtree. The essential point is that no unaffected
part of the graph needs to be reconsidered. Since edge scans dominate the cost
of dynamic BFS, minimizing the scanned region is the key design principle.

The decremental case requires additional care because reachability itself may
change. If, after deleting a tree edge, the destination vertex remains
reachable from outside its former subtree, then much of the repair process is
analogous to the incremental case. If not, the algorithm must detect
temporarily disconnected vertices, search for alternative parents in the proper
order, and remove vertices that become unreachable. A two-stage iterative
procedure is therefore preferable to a naive recursive repair: it makes the
reachability cases explicit and avoids repeated rescanning of the same region.

It is also useful to separate the general problem from special application
domains such as control-flow graphs (CFGs). In a CFG-like setting one may wish
to preserve connectivity from a distinguished entry node by means of a
super-source or fictitious edge. That model can simplify reachability but may
force renumbering at the root level whenever new nodes are attached. For such
applications, practical update policies are possible---for instance, by only
allowing a newly inserted node to arrive with a unique predecessor already in
the BFST and no successors---but our framework does not rely on these extra
assumptions.

More generally, one might attempt to root BFS trees at all source SCCs of the
condensation DAG. However, maintaining one BFST per root SCC is worst-case
quadratic, and the resulting trees need not coincide unless those SCCs have the
same reachable structure. Unlike in the DFS setting, a virtual root does not by
itself yield a meaningful single BFST for the entire graph. We therefore adopt
the more general single-source formulation with no additional connectivity
assumptions beyond the chosen root.

Edge insertions can integrate an entire previously unreachable region into the
maintained BFST. Edge deletions, conversely, can disconnect some or all of a
subtree. More subtly, a deleted tree edge may leave every vertex of the former
subtree reachable, but only via replacement parents internal to that subtree;
then the child--parent relation throughout the subtree may change. These are
the genuinely expensive cases, because renumbering must begin at the first
changed vertex and continue through every subsequent emitted vertex in BFS
order.

The maintained state is deliberately compact. The BFST stores parent,
child-successor, and level information, all within $\mathcal{O}(n)$ space. We
also store a global BFS index map $\Id{bfs\_int}$, its inverse array
$\Id{bfs\_revint}$, and a level-boundary array $\Id{bfs\_level}$ that records
the last BFS index of each level. Together these structures allow the update
algorithms to re-emit only the portion of the BFS order that has actually
changed.

Maintaining depth information naively during every tree-edge modification could
cost $\mathcal{O}(n^2)$ in the worst case, because multiple parent changes may
occur during one graph update. Our approach delays those level updates until the
relevant sweep has identified the affected region, after which all necessary
depth changes are applied in one coordinated pass. This reduces the depth
maintenance itself to $\mathcal{O}(n)$ in the size of the changed subtree and is
one of the reasons the fully dynamic update routines remain competitive with
their semi-dynamic predecessors.

\subsection{Maintaining BFS tree, depth, and order under edge updates}\label{subsec:bfs-incremental}
\label{sec:bfs-maintenance}

\paragraph{Why this matters.}
Recomputing a BFST after each update costs $\Theta(m)$ time per change even when
the modification affects only a small local region. Semi-dynamic methods often
maintain distances without preserving a stable per-level order, or they rebuild
the ordering implicitly during each update. Our contribution is a pair of fully
explicit incremental and decremental algorithms that jointly maintain
\begin{enumerate}[label=\textup{(\roman*)}, leftmargin=2.1em]
  \item the BFST parent $\Id{pred}[v]$,
  \item the depth $\Id{level}[v]$,
  \item the tree children $\Id{succ}[u]$,
  \item a global BFS numbering $\Id{bfs\_int}[v]$ and its inverse
  $\Id{bfs\_revint}[i]$, and
  \item per-level range boundaries $\Id{bfs\_level}[\ell]$, the end index of
  level $\ell$ in $\Id{bfs\_revint}$.
\end{enumerate}
This state makes it possible to renumber and relevel only the necessary
vertices, preserve the order of unaffected vertices, and merge a newly
discovered component into the existing BFST without a full recomputation.

\paragraph{Invariants.}
We maintain the following invariants throughout the update sequence:
\begin{enumerate}[label=\textbf{I\arabic*.}, leftmargin=2.3em]
  \item \emph{Depth-parent rule:} for every reachable vertex $v\neq r$,
  $\Id{level}[v]=\Id{level}[\Id{pred}[v]]+1$.
  \item \emph{Contiguous levels:} if $\ell<\ell'$, then all vertices of level
  $\ell$ occupy a contiguous block of indices in $\Id{bfs\_revint}$ that ends at
  $\Id{bfs\_level}[\ell]$, and every vertex of level $\ell'$ appears strictly
  later.
  \item \emph{Stable per-level order:} within a fixed level, the relative order
  is consistent with the BFS tie-breaker induced by the parent’s BFS index. If
  an update does not force a reparenting, the order of unaffected vertices in
  that level is preserved.
\end{enumerate}

\paragraph{Incremental update $(x,y)$ insertion.}
If $x$ is not currently discovered, then inserting $(x,y)$ has no effect on the
maintained BFST. If $y$ is undiscovered but $x$ is discovered, then the update
may merge a previously disconnected region into the existing BFS tree. In the
general case, we first test in constant time whether the new edge is irrelevant:
for example, if $\Id{level}[y]\leq \Id{level}[x]+1$ and the new edge does not
improve the tie-breaker, then no structural change is needed. Otherwise the
edge triggers a layered sweep that begins at the target level
$\ell:=\Id{level}[x]+1$.

The sweep is organized around two small dictionaries. The map $\Id{thisLevel}$
stores, for each current parent candidate, the set of children that must be
promoted into the current level; it is initialized to $\{x:\{y\}\}$. The map
$\Id{nextLevel}$ records promotions that should occur one layer later. We also
maintain a set $\Id{monitor}$ of vertices whose parent need not change but whose
level must be updated, and a set $\Id{exclude}$ that prevents duplicate work.

Within a level, vertices are processed in current BFS order by scanning the
range $[\Id{start}..\Id{end}]$ of $\Id{bfs\_revint}$, while a write pointer
$p=\Id{end}+1$ appends the updated order. For a current vertex $\Id{node}$, we
either fast-forward over a contiguous child block when no changes are pending,
or we inspect its outgoing tree/non-tree edges together with the vertices
explicitly queued in $\Id{thisLevel}[\Id{node}]$. Every promoted or monitored
vertex $q$ receives the new level $\Id{level}[q]=\ell$, is emitted at index $p$
(updating both $\Id{bfs\_int}[q]$ and $\Id{bfs\_revint}[p]$), and increments the
write pointer. Each such $q$ then scans its successors once to test whether a
neighbor $z$ has become closer and should therefore be placed either back into
$\Id{thisLevel}$ (if $z$ also belongs to level $\ell$) or into $\Id{nextLevel}$
(if it must be processed at level $\ell+1$). When the current layer is done, we
set $\Id{bfs\_level}[\ell]=p-1$ and continue only if a next layer has been
created or if newly discovered vertices are still being integrated.

\paragraph{Decremental update $(x,y)$ deletion.}
If $\Id{pred}[y]\neq x$, then deleting $(x,y)$ does not change the BFST.
Otherwise the deleted edge is a tree edge and we first perform a lightweight
pre-scan of the subtree formerly rooted at $y$. This scan serves three
purposes. First, it identifies candidate reattachment points through a map
$\Id{addLevel}[z]$ recording children $q$ that may reattach under predecessor
$z$. Second, it marks a set $\Id{monitor}$ of vertices that remain reachable via
non-tree edges but whose levels may need adjustment. Third, it records a list
$\Id{unreachord}$ of vertices that may become unreachable after the deletion.
Only after this information is collected do we remove $(x,y)$ from the BFST.

Repair then proceeds through the same layered sweep used by the incremental
algorithm, beginning at level $\ell=\Id{level}[x]+1$. For each vertex in the
current layer, we stream its child block in BFS order, re-emitting all visible
successors. If a child $z$ is in $\Id{monitor}$ or in the temporary
$\Id{unreachable}$ set and still claims the current vertex as parent, then it is
releveled to $\ell$ and emitted. If the current vertex belongs to
$\Id{addLevel}$, each candidate child not yet emitted is reparented to it,
assigned the new level, and emitted in order. After the final layer, the
algorithm appends the boundary of the reachable prefix and then scans
$\Id{unreachord}$ backwards, deleting any vertices that remained unreachable,
removing their BFST incidence, and popping their trailing slots from the global
inverse numbering.

\paragraph{Complexity and correctness.}
Let $A$ denote the \emph{affected region}: the vertices whose parent or level
changes, together with the edges incident to those vertices that are actually
examined. The layered sweeps touch each affected vertex and each examined edge
only $O(1)$ times, and all renumbering is carried out as a linear emit into the
$\Id{bfs\_revint}$ stream. Unaffected contiguous child blocks are skipped in
$O(1)$ time via the fast path.

\begin{theorem}[Affected-region work bound]
For a single edge insertion or deletion, the total update time is
\[
O\bigl(|A_V|+|A_E|\bigr),
\]
where $A_V$ and $A_E$ are the affected vertices and examined affected edges,
respectively. In the worst case this is $\Theta(n+m)$, but for localized graph
changes it is linear in the size of the truly impacted subgraph. The same bound
also covers the case in which an insertion merges a previously disconnected
component into the existing BFST.
\end{theorem}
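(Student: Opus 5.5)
The plan is an amortized charging argument over the operations of the layered sweep. Every unit of work done by the incremental or decremental routine is assigned to a vertex of $A_V$ or an edge of $A_E$, and no element receives more than $O(1)$ charges. I would first fix a precise reading of $A$. $A_V$ contains every vertex whose $\Id{pred}$, $\Id{level}$, or $\Id{bfs\_int}$ value changes, together with every vertex appended to the stream at the write pointer $p$. $A_E$ contains every edge scanned from such a vertex and the constantly many edges inspected by the initial relevance test. Under this reading the constant-time rejection cases are immediate: $x$ undiscovered, $\Id{level}[y]\le \Id{level}[x]+1$ with no tie-breaker improvement, or $\Id{pred}[y]\ne x$ on deletion. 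Each costs $O(1)$.

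For the nontrivial cases I would go through the sweep operation by operation.
\begin{enumerate}[label=(\alph*)]
  \item \textbf{Emission.} Each emission at index $p$ writes $\Id{bfs\_int}[q]$, $\Id{bfs\_revint}[p]$ and $\Id{level}[q]$ in $O(1)$. The set $\Id{exclude}$, together with the fact that a vertex's level only moves monotonically within one update (decreasing on insertion and increasing on deletion, by the monotonicity observations of \S\ref{subsec:bfs-algorithm}), ensures each vertex is emitted at most once. These writes are therefore charged to $A_V$.
  \item \textbf{Successor scans.} Each promoted or monitored $q$ scans its out-edges exactly once. The dictionary insertions into $\Id{thisLevel}$ and $\Id{nextLevel}$ cost $O(1)$ expected each and are charged to the scanned edge $(q,z)\in A_E$.
  \item \textbf{Fast-forwarding.} Skipping an unchanged contiguous child block costs $O(1)$ per block. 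The block is charged to its parent, which is being streamed because it lies in the re-emitted range.
  \item \textbf{Level boundaries.} Each update of $\Id{bfs\_level}[\ell]$ is charged to the level that was processed.
  \item \textbf{Decremental extras.} The pre-scan touches the old subtree of $y$ and the in-edges used to populate $\Id{addLevel}$. The backward scan of $\Id{unreachord}$ removes each unreachable vertex once. Both are charged to vertices whose parent or level changes, or which become unreachable; all of these lie in $A_V$.
\end{enumerate}

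The merge case, in which an insertion connects a previously undiscovered region, goes through identically. Every newly discovered vertex gains a $\Id{pred}$ and a $\Id{level}$, so it lies in $A_V$, and its out-edges are scanned once. The worst-case statement then follows from $|A_V|\le n$ and $|A_E|\le m$. Tightness comes from a path-like example, for instance deleting the root's only tree edge into a long chain that is re-entered from its far end, where every vertex and edge is relevelled.

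The main obstacle is the streamed re-emission. Renumbering ``begins at the first changed vertex and continues through every subsequent emitted vertex'', so vertices whose parent and level are unchanged but whose index shifts are still copied. I would handle this by the definitional choice above, counting index-shifted vertices in $A_V$. I would then argue that each level touched by the sweep contains at least one genuinely changed vertex or is a suffix being shifted, so the number of processed levels is at most $|A_V|$. This keeps the per-level constant overhead within the bound. If the intended definition of $A$ excludes pure index shifts, this step fails, and the bound must be weakened to include the length of the rewritten suffix of $\Id{bfs\_revint}$.
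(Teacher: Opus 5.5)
\textbf{Verdict: genuine gap.} Your plan is the paper's argument made explicit. Its sketch likewise charges each scan or reparenting to an affected vertex or edge, and asserts that fast-forwarding costs $O(1)$ per unaffected child block. You omit its invariant discussion, which a pure time bound does not need.

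\textbf{Step (c) fails.} In both routines the read pointer starts at the first vertex of level $\Id{level}[x]$, not at $x$: $\Id{start}=\Id{bfs\_level}[\Id{level}[x]-1]+1$ and $\Id{end}=\Id{bfs\_level}[\Id{level}[x]]$, while writing begins at $\Id{end}+1$. So the first pass reads the whole parent level, and none of it is re-emitted. Your justification that the parent ``lies in the re-emitted range'' is therefore false there, and a childless parent has no block to absorb its $O(1)$ test. Concretely, let $r$ have children $a_1,\dots,a_n$, let $a_n\to b\to y$ be tree edges with $y$ a sink, and insert $(a_1,y)$. The sweep reads all $n$ vertices of level~1, yet even your enlarged sets give $A_V=\{y,b\}$ and $|A_E|=O(1)$. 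The decremental routine has the same first pass and no fast path at all: take tree edge $(a_1,y)$ and non-tree edge $(a_2,y)$, then delete $(a_1,y)$; the cost is $\Theta(n)$ with $A_V=\{y\}$. Bounding the number of processed levels does not help, because the overhead is per scanned parent, not per level.

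\textbf{Your fallback remark is decisive.} Under the paper's definition ($A_V$ = vertices whose parent or level changes), the bound cannot hold for any algorithm that stores $\Id{bfs\_int}$ explicitly and respects I2--I3. Give each $a_i$ a child $c_i$ and hang $y$ below $c_n$. Inserting $(a_1,y)$ changes the parent and level of $y$ alone, yet forces $c_2,\dots,c_n$ to shift by one index. Moreover, both sweeps stop only when the write pointer reaches the end of $\Id{bfs\_revint}$ or a level comes out empty, so they always traverse the whole suffix. What is provable for this pseudocode is a bound linear in the size of level $\Id{level}[x]$, plus the rewritten suffix of $\Id{bfs\_revint}$, plus the scanned edges. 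The paper's sketch leaves exactly this hole open.

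\textbf{Step (e) has two further problems.} First, you charge each in-edge examined by the pre-scan $O(1)$, but each one triggers \Fn{isAncestor}$(z,y)$. With only $\Id{pred}$, $\Id{succ}$ and $\Id{level}$ stored, deciding whether $z$ lies below $y$ means walking up the tree from $z$. A long path below $y$ whose last vertex has edges back to every path vertex then makes the pre-scan quadratic. That breaks even the $\Theta(n+m)$ worst case, unless you assume, and pay to maintain, an $O(1)$ ancestor oracle. Second, the pre-scan visits all of $y$'s old subtree, including vertices whose parent and level are unchanged. So (e) must charge them as re-emitted vertices rather than as changed ones.
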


\begin{proof}[Proof sketch]
Invariant \textbf{I1} is preserved whenever a vertex is reparented or confirmed
under its current parent, because the algorithm assigns its level from the
parent’s level in the emitting sweep. Invariant \textbf{I2} follows from the
fact that the sweep writes one level at a time into contiguous positions of
$\Id{bfs\_revint}$ and updates $\Id{bfs\_level}$ only after completing that
level. Invariant \textbf{I3} holds because, within a fixed level, vertices are
processed in existing BFS order and ties are resolved by the parent’s BFS index;
when no reparenting is required, unaffected vertices are emitted in the same
relative order. Every explicit scan or reparenting action can therefore be
charged to a unique affected edge or vertex, while fast-forwarding keeps
unaffected child blocks at constant cost.
\end{proof}

\paragraph{Relation to prior semi-dynamic BFS.}
Unlike semi-dynamic approaches that maintain only distances or that reconstruct
local BFS orderings on demand, our method preserves a consistent per-level order
together with a global numbering. These additional invariants support constant-
time level-boundary maintenance, block skipping for unaffected children, and
clean merging of newly reachable components. This is the principal mechanism
that lets the update routines behave like a true maintained traversal rather
than a sequence of local recomputations.

\paragraph{Full pseudocode.}
To keep the main exposition readable, we place the complete line-by-line
pseudocode in Appendix~\ref{apx:algos}. The appendix contains the precise
incremental and decremental procedures used by our prototype and matches the
high-level descriptions given above.

\appendix
\section{Full pseudocode}\label{apx:algos}

This appendix records the full maintenance procedures for the fully dynamic BFS
framework. The first algorithm handles edge insertion; the second handles the
deletion of a tree edge and the resulting reachability repair.

\subsection{Incremental BFS after edge insertion}
\textbf{What it does.} Reattaches and relabels nodes when adding \(\langle x,y\rangle\)
can shorten the BFS path to \(y\) or its descendants.

\textbf{Inputs/outputs.} Takes adjacency \(\Id{succ}\) and an in-place BFS structure
(levels, parent, numbering). Updates \(\Id{bfs\_tree}\) and per-level spans.

\textbf{Guarantees/complexity.} Touches at most the frontier levels that actually
improve; time is linear in the size of the improved slice. See \S\ref{sec:bfs}.

\begin{algorithm}[H]
\centering
  \caption{Incremental BFS after edge insertion \(\langle x,y\rangle\)}
  \label{alg:inc-bfs-add-edge}
  \begin{algorithmic}[1]
\Require \(\Id{succ}\) (adjacency), \(\Id{bfs\_tree}\) with fields \(\Id{level},\Id{pred},\Id{succ},\Id{virtual\_root},\Id{root\_succ}\);
\(\Id{bfs\_int}\) (numbering), \(\Id{bfs\_revint}\) (inverse numbering), \(\Id{bfs\_level}\) (end-index per level), \(x,y\).
\If{\(x \notin \Id{bfs\_tree.level}\)} \textbf{return} \EndIf
\State \(\Id{newnode} \gets (y \notin \Id{bfs\_tree.level})\)
\If{\(\neg \Id{newnode} \ \land\ \neg\big(\Id{bfs\_tree.level}[y] > \Id{bfs\_tree.level}[x] + 1\ \lor\ 
   (\Id{bfs\_tree.level}[y] = \Id{bfs\_tree.level}[x] + 1 \land \Id{bfs\_int}[\Id{bfs\_tree.pred}[y]] > \Id{bfs\_int}[x])\big)\)}
  \textbf{return}
\EndIf
\State \(\Id{lvl} \gets \Id{bfs\_tree.level}[x] + \big(1 \ \text{if}\ \Id{bfs\_tree.level}[y] \neq \Id{bfs\_tree.level}[x]\ \text{else}\ 0\big)\)
\State \(\Id{start} \gets \begin{cases}
1 & \text{if } x=\Id{bfs\_tree.virtual\_root}\\
\Id{bfs\_level}[\Id{lvl}-2]+1 & \text{otherwise}
\end{cases}\)
\State \(\Id{end} \gets \Id{bfs\_level}[\Id{lvl}-1]\), \quad \(\Id{p} \gets \Id{end}+1\), \quad \(\Id{thisLevel} \gets \{\ x :\ \{y\}\ \}\)
\State \(\Id{nextLevel} \gets \emptyset\),\ \(\Id{exclude} \gets \{y\}\),\ \(\Id{monitor} \gets \emptyset\)
\While{\textbf{true}}
  \State \(\Id{nextstart} \gets \Id{p}\)
  \While{\(\Id{start} \le \Id{end}\)}
    \State \(\Id{node} \gets \Id{bfs\_revint}[\Id{start}]\)
    \State \(\Id{E} \gets (\Id{bfs\_tree.succ}[\Id{node}]\ \text{if}\ \Id{start}\neq 1\ \text{else}\ \Id{bfs\_tree.root\_succ})\); \(\Id{i} \gets |\Id{E}|\)
    \If{\(\Id{node} \notin \Id{thisLevel} \ \land\ (\Id{i}=0 \ \lor\ \Id{bfs\_int}[\Id{E}[0]]=\Id{p})\)} \Comment{fast path: contiguous block}
      \State \(\Id{p} \gets \Id{p} + \Id{i}\); \(\Id{start} \gets \Id{start}+1\); \textbf{continue}
    \EndIf
    \State \(\Id{iter} \gets (\Id{E} \ \text{followed by}\ \Id{thisLevel}[\Id{node}] \ \text{if}\ \Id{node}\in \Id{thisLevel}\ \text{else}\ \Id{E})\)
    \ForAll{\(q \in \Id{iter}\)}
      \State \(\Id{changelvl} \gets (\Id{node}\in \Id{thisLevel} \land q\in \Id{thisLevel}[\Id{node}])\)
      \If{\(q \notin \Id{bfs\_tree.pred}\)} \(\Id{bfs\_tree}.\Fn{add\_node}(q)\) \EndIf
      \If{\(q \notin \Id{monitor} \land \Id{changelvl} \land (\Id{bfs\_tree.pred}[q] \neq \Id{node} \ \lor\ \Id{bfs\_tree.pred}[q] = \Id{bfs\_tree.virtual\_root})\)}
\algstore{incBFS}
\end{algorithmic}
\end{algorithm}
\begin{algorithm}
\centering
\begin{algorithmic} [1]
\algrestore{incBFS}
        \If{\(\Id{bfs\_tree.pred}[q] \neq \Id{bfs\_tree.virtual\_root} \ \lor\ q \in \Id{bfs\_tree.root\_succ}\)}
          \State \(\Id{bfs\_tree}.\Fn{remove\_edge}(\Id{bfs\_tree.pred}[q], q)\)
        \EndIf
        \State \(\Id{bfs\_tree}.\Fn{add\_edge}(\Id{node}, q)\);\quad \(\Id{bfs\_tree.level}[q] \gets \Id{lvl}\)
      \ElsIf{\(q \in \Id{monitor}\)}
        \(\Id{bfs\_tree.level}[q] \gets \Id{lvl}\)
      \EndIf
      \State \(\Id{bfs\_int}[q] \gets \Id{p}\);\quad \(\Id{bfs\_revint}[\Id{p}] \gets q\);\quad \(\Id{p} \gets \Id{p}+1\)
      \If{\(\Id{changelvl} \ \lor\ q\in \Id{monitor}\)}
        \ForAll{\(z \in \Id{succ}[q]\)}
          \If{\(z=q \ \lor\ z \in \Id{exclude}\)} \textbf{continue} \EndIf
          \State \(\Id{reparent} \gets (z \notin \Id{bfs\_tree.level})\ \lor\ (\Id{bfs\_tree.pred}[z]=q)\ \lor\ (\Id{bfs\_tree.level}[z]>\Id{lvl}+1)\ \lor (\Id{bfs\_tree.level}[z]=\Id{lvl}+1 \land (\Id{bfs\_revint}[\Id{bfs\_int}[\Id{bfs\_tree.pred}[z]]] \neq \Id{bfs\_tree.pred}[z]\ \lor\ \Id{bfs\_int}[\Id{bfs\_tree.pred}[z]]>\Id{bfs\_int}[q]\ \lor\ (\Id{node}\in \Id{thisLevel} \land \Id{bfs\_tree.pred}[z]\in \Id{thisLevel}[\Id{node}])))\)
          \If{\(\Id{reparent}\)}
            \If{\(z \in \Id{bfs\_tree.level} \land \Id{bfs\_tree.level}[z]=\Id{lvl}\)}
              \State \(\Id{thisLevel}[q] \gets (\Id{thisLevel}[q] \ \text{or}\ \emptyset)\ \cup\ \{z\}\)
            \Else
              \State \(\Id{nextLevel}[q] \gets (\Id{nextLevel}[q] \ \text{or}\ \emptyset)\)
              \If{\(z \in \Id{bfs\_tree.pred} \land \Id{bfs\_tree.pred}[z]=q\)}
                \State \(\Id{monitor} \gets \Id{monitor} \cup \{z\}\)
              \Else
                \quad \(\Id{nextLevel}[q] \gets \Id{nextLevel}[q] \cup \{z\}\)
              \EndIf
            \EndIf
            \State \(\Id{exclude} \gets \Id{exclude} \cup \{z\}\)
          \EndIf
        \EndFor
      \EndIf
    \EndFor
    \State \(\Id{start} \gets \Id{start}+1\)
  \EndWhile
  \If{\(\Id{lvl} < |\Id{bfs\_level}|\)} \(\Id{bfs\_level}[\Id{lvl}] \gets \Id{p}-1\)
  \ElsIf{\(\Id{newnode} \land \Id{lvl} \ge |\Id{bfs\_level}|-1\)} \(\Id{bfs\_level}.\Fn{append}(\Id{p}-1)\) \EndIf
  \If{\(|\Id{nextLevel}| \neq 0 \ \lor\ (\Id{newnode} \land \Id{p}-1 \neq |\Id{bfs\_int}|)\)}
    \Comment{next layer}
  \ElsIf{\(\Id{p}-1 = |\Id{bfs\_revint}|\)} \(\Id{bfs\_level} \gets \Id{bfs\_level}[\,:\Id{lvl}+1\,]\); \textbf{break}
  \ElsIf{\(\Id{p}-1 = \Id{bfs\_level}[\Id{lvl}-1]\)} \(\Id{bfs\_level} \gets \Id{bfs\_level}[\,:\Id{lvl}\,]\); \textbf{break}
  \EndIf
  \State \(\Id{start} \gets \Id{nextstart}\);\ \(\Id{end} \gets \Id{p}-1\)
  \State \(\Id{thisLevel} \gets \Id{nextLevel}\);\ \(\Id{nextLevel} \gets \emptyset\);\ \(\Id{lvl} \gets \Id{lvl}+1\)
\EndWhile
  \end{algorithmic}
\end{algorithm}

\subsection{Decremental BFS after removing a tree edge}
\textbf{What it does.} Handles the hard case: removing a \emph{tree} edge \(\langle x,y\rangle\)
may disconnect a subtree; the routine rescans for alternative parents or marks
nodes unreachable.

\textbf{Inputs/outputs.} Uses \(\Id{pred}\) and the BFS structure; updates parent/level
and trims numbering for unreachable nodes.

\textbf{Guarantees/complexity.} Work scales with the actually disconnected region,
plus a bounded scan of candidate reattach points. See \S\ref{sec:bfs}.

\begin{algorithm}[H]
\centering
  \caption{Decremental BFS after removing tree edge \(\langle x,y\rangle\)}
  \label{alg:dec-bfs-remove-edge}
  \begin{algorithmic}[1]
    \Require \(\Id{pred}\) (predecessor adjacencies), \(\Id{bfs\_tree}\) with fields \(\Id{level},\Id{pred},\Id{succ},\Id{virtual\_root},\Id{root\_succ}\); \(\Id{bfs\_int}\) (numbering), \(\Id{bfs\_revint}\) (inverse numbering), \(\Id{bfs\_level}\) (end-index per level), \(x,y\).
    \State \textbf{if} \(x \notin \Id{bfs\_tree.level} \ \lor\ y \notin \Id{bfs\_tree.level}\) \textbf{ then return}
    \State \textbf{if} \(\Id{bfs\_tree.pred}[y] \neq x\) \textbf{ then return} \Comment{Only process deleted \emph{tree} edges}
    \State \(\Id{thisLevel} \gets \{\,x:\{y\}\,\};\quad \Id{nextLevel} \gets \emptyset;\quad \Id{addLevel} \gets \emptyset\)
    \State \(\Id{unreachable} \gets \emptyset;\quad \Id{unreachord} \gets [\,]\;;\quad \Id{monitor} \gets \emptyset\)    
    \While{\(|\Id{thisLevel}| \neq 0\)}\Comment{Pre-scan subtree (BFS order) for reattach points}
      \ForAll{\( \Id{node} \in \Id{thisLevel}\)}
        \ForAll{\( q \in \Id{thisLevel}[\Id{node}] \)}
          \ForAll{\( z \in \Id{pred}[q] \)}
            \If{\(z = q \ \lor\ z \notin \Id{bfs\_tree.pred}\)} \State \textbf{continue} \EndIf
            \State \(\Id{addLevel}[z] \gets (\Id{addLevel}[z]\ \text{or}\ \emptyset)\ \cup\ \{\,q\,\}\)
            \If{\(\neg \Id{bfs\_tree}.\Fn{isAncestor}(z, y)\ \lor\ \bigl(z \in \Id{monitor} \land z \notin \Id{unreachable}\bigr)\)}
              \(\Id{monitor} \gets \Id{monitor} \cup \{\,q\,\}\)
            \EndIf
          \EndFor
          \State \(\Id{nextLevel}[q] \gets \Id{bfs\_tree.succ}[q]\)
          \If{\(\Id{node} \notin \Id{monitor} \ \land\ q \notin \Id{monitor}\)}
            \State \(\Id{unreachord}.\Fn{append}(q);\quad \Id{unreachable} \gets \Id{unreachable} \cup \{\,q\,\}\)
          \EndIf
        \EndFor
      \EndFor
      \State \(\Id{thisLevel} \gets \Id{nextLevel};\quad \Id{nextLevel} \gets \emptyset\)
    \EndWhile

    \State \(\Id{lvl} \gets \Id{bfs\_tree.level}[x] + 1\)
    \State \(\Id{start} \gets \begin{cases}
      1 & \text{if } x=\Id{bfs\_tree.virtual\_root}\\
      \Id{bfs\_level}[\Id{bfs\_tree.level}[x]-1] + 1 & \text{otherwise}
    \end{cases}\)
    \State \(\Id{end} \gets \Id{bfs\_level}[\Id{bfs\_tree.level}[x]];\quad \Id{p} \gets \Id{end} + 1;\quad \Id{exclude} \gets \emptyset\)
    \State \(\Id{bfs\_tree}.\Fn{remove\_edge}(x, y)\)

    \While{\textbf{true}}
      \State \(\Id{nextstart} \gets \Id{p}\)
\algstore{decBFS}
\end{algorithmic}
\end{algorithm}
\begin{algorithm}
\centering               
\begin{algorithmic} [1]
\algrestore{decBFS}      
      \While{\(\Id{start} \le \Id{end}\)}
        \State \(\Id{node} \gets \Id{bfs\_revint}[\Id{start}]\)
        \State \(\Id{E} \gets \bigl(\Id{bfs\_tree.succ}[\Id{node}] \ \text{if}\ \Id{start}\neq 1\ \text{else}\ \Id{bfs\_tree.root\_succ}\bigr)\)
        \ForAll{\( z \in \Id{E} \)}
          \If{\(z \in \Id{unreachable} \ \lor\ z \in \Id{monitor}\)}
            \If{\(\Id{bfs\_tree.pred}[z] \neq \Id{node}\)} \textbf{continue} \EndIf
            \State \(\Id{bfs\_tree.level}[z] \gets \Id{lvl};\quad \Id{exclude} \gets \Id{exclude} \cup \{\,z\,\}\)
            \If{\(z \in \Id{unreachable}\)} \(\Id{unreachable} \gets \Id{unreachable} \setminus \{\,z\,\}\) \EndIf
          \EndIf
          \State \(\Id{bfs\_int}[z] \gets \Id{p};\quad \Id{bfs\_revint}[\Id{p}] \gets z;\quad \Id{p} \gets \Id{p}+1\)
        \EndFor

        \If{\(\Id{node} \in \Id{addLevel}\)}
          \ForAll{\( z \in \Id{addLevel}[\Id{node}] \)}
            \If{\(z \in \Id{exclude}\)} \textbf{continue} \EndIf
            \State \(\Id{exclude} \gets \Id{exclude} \cup \{\,z\,\}\)
            \If{\(z \in \Id{unreachable}\)} \(\Id{unreachable} \gets \Id{unreachable} \setminus \{\,z\,\}\) \EndIf
            \If{\(\Id{bfs\_tree.pred}[z] \neq \Id{node}\)}
              \State \(\Id{bfs\_tree}.\Fn{remove\_edge}(\Id{bfs\_tree.pred}[z], z)\)
              \State \(\Id{bfs\_tree}.\Fn{add\_edge}(\Id{node}, z)\)
            \EndIf
            \State \(\Id{bfs\_tree.level}[z] \gets \Id{lvl}\)
            \State \(\Id{bfs\_int}[z] \gets \Id{p};\quad \Id{bfs\_revint}[\Id{p}] \gets z;\quad \Id{p} \gets \Id{p}+1\)
          \EndFor
        \EndIf

        \State \(\Id{start} \gets \Id{start}+1\)
      \EndWhile

      \If{\(\Id{lvl}-1 = |\Id{bfs\_level}|\)} \(\Id{bfs\_level}.\Fn{append}(\Id{end})\)
      \Else \(\Id{bfs\_level}[\Id{lvl}-1] \gets \Id{end}\) \EndIf

      \If{\(\Id{p} > |\Id{bfs\_revint}| \ \lor\ \Id{p} = \Id{start}\)}
        \(\Id{bfs\_level} \gets \Id{bfs\_level}[\,:\Id{lvl}\,]\); \textbf{break}
      \EndIf

      \State \(\Id{start} \gets \Id{nextstart};\quad \Id{end} \gets \Id{p}-1;\quad \Id{lvl} \gets \Id{lvl}+1\)
    \EndWhile

    \If{\(\Id{bfs\_level}[-1] \neq |\Id{bfs\_revint}| - |\Id{unreachable}|\)}
      \State \(\Id{bfs\_level}.\Fn{append}\bigl(|\Id{bfs\_revint}| - |\Id{unreachable}|\bigr)\)
    \EndIf

    \For{\(\Id{idx} \gets |\Id{unreachord}| \ \textbf{down to}\ 1\)}
      \State \(z \gets \Id{unreachord}[\Id{idx}]\)
      \If{\(z \notin \Id{unreachable}\)} \textbf{continue} \EndIf
      \State \(\Id{bfs\_tree}.\Fn{remove\_edge}(\Id{bfs\_tree.pred}[z], z)\)
      \State \(\Id{bfs\_tree}.\Fn{remove\_node}(z)\)
      \State \textbf{delete} \(\Id{bfs\_int}[z]\)
      \State \textbf{delete} \(\Id{bfs\_revint}[\,|\Id{bfs\_revint}|\,]\) \Comment{Pop last inverse-index slot}
    \EndFor
  \end{algorithmic}
\end{algorithm}

\bibliographystyle{plain}
\bibliography{dynamicbfs}

\end{document}